\newtheorem{theorem}{Theorem}[section]
\newtheorem{lemma}[theorem]{Lemma}
\newcommand{\com}[1]{\left[#1\right]}
\newcommand{\orderof}[1]{\mathcal{O}(#1)} 
\newcommand{\EqDef}{\stackrel{\mathrm{def}}{=}}
\newcommand{\Eq}[1]{Eq.~(\ref{#1})}
\newcommand{\Fig}[1]{Fig.~\ref{#1}}
\newcommand{\iRef}[1]{Ref.~\cite{#1}}
\newcommand{\Refs}[1]{Refs.~\cite{#1}}
\newcommand{\App}[1]{Appendix~\ref{#1}}
\newcommand{\Nspam}{N_{spam}}
\begin{document}

\title{Optimal short-time measurements for Hamiltonian learning}

\author{Assaf Zubida}

\author{Elad Yitzhaki}

\author{Netanel H. Lindner}

\author{Eyal Bairey}
\email{baeyal@gmail.com}

\affiliation{Physics Department, Technion, 3200003, Haifa, Israel}

\begin{abstract}
Characterizing noisy quantum devices requires methods for learning the underlying  quantum Hamiltonian which governs their dynamics. Often, such methods compare measurements to simulations of candidate Hamiltonians, a task which requires exponential computational complexity. Here, we propose efficient measurement schemes based on short-time dynamics which circumvent this exponential difficulty. We provide estimates for the optimal measurement schedule and reconstruction error, and verify these estimates numerically. We demonstrate that the reconstruction requires a system-size independent number of experimental shots, and identify a minimal set of state preparations and measurements which yields optimal accuracy for learning short-ranged Hamiltonians. Finally, we show how grouping of commuting observables and use of Hamiltonian symmetries improve the accuracy of the Hamiltonian reconstruction.
\end{abstract}

\maketitle

\paragraph*{Introduction.} 
Recovering the unknown Hamiltonian of a quantum system is becoming an increasingly important task for modelling quantum materials \cite{Kwon2020, Wang2020}, characterizing quantum states \cite{Li2008,Turkeshi2018, Giudici2018, Zhu2018,Dalmonte2018,Kokail2020, Rattacaso2021OptimalStates} and engineering quantum devices \cite{Boulant2003, Innocenti2020SupervisedDesign, BenAv2020DirectQubit,  Carrasco2021TheoreticalVerification}. Whereas benchmarking protocols can extract the average error rates of quantum devices \cite{Magesan2011ScalableProcesses}, Hamiltonian learning provides a detailed characterization of their real-time dynamics, allowing to identify physical sources of errors \cite{Shulman2014SuppressingEstimation, Sheldon2016, Sundaresan2020, Eisert2020QuantumBenchmarking}. 

Many Hamiltonian learning methods are based on measurements of time-dependent observables in quantum states evolved over time \cite{Han2021TomographyLearning, Sone2017, Zhang2014, Zhang2015, DiFranco2009, Burgarth2009, Samach2021LindbladProcessor}. However, as time progresses, the dynamics of many-body Hamiltonians become exponentially complex, posing a computational challenge when attempting to find the Hamiltonian which matches the observed behaviour. To deal with this exponential complexity, some approaches assume partial control of the quantum system to be characterized \cite{Wang2015, Valenti2019, Krastanov2019StochasticVariables, Valenti2021ScalableDynamics}, or employ an additional trusted quantum simulator \cite{Granade2012, Wiebe2014, Wiebe2014a, Wiebe2015, Wang2017}. Here the complexity refers both to the number of measurements and the classical computation involved in the learning procedure. 


A promising approach which avoids this complexity learns Hamiltonians from short-time dynamics \cite{Shabani2011, DaSilva2011}. This approach learns short-ranged Hamiltonians by preparing various product states, and measuring the time-derivative of different observables \cite{DaSilva2011}. The short-time dynamics are linear in the Hamiltonian parameters, allowing a computationally-efficient reconstruction. The rapid experimental preparation of product states provides an advantage over approaches which learn the dynamics from their steady states \cite{Qi2017, Chertkov2018, Greiter2018, Rudinger2015, Kieferova2017, Kappen2018, Bairey2019, Bairey2019a, Dumitrescu2019HamiltonianSystems,  Anshu2020, Evans2019ScalableLearning}. 

Here we propose optimized measurement protocols for learning short-ranged Hamiltonians from short-time dynamics. We analyze the reconstruction error in a protocol that prepares random product states and measures in random local bases, as in recent randomized measurement schemes \cite{Li2019, Elben2019StatisticalStates, Evans2019ScalableLearning, Huang2020PredictingMeasurements}. We then develop a derandomized protocol that guarantees the same reconstruction accuracy with a fixed set of initial states and measurements. For a given state, the measurements of this protocol characterize all geometrically-local reduced density matrices using a fixed set of measurement bases, adapting overlapping tomography methods \cite{Cotler2020QuantumTomography, Garcia-Perez2020PairwiseSystems, Bonet-Monroig2020NearlyStates} to the special case of lattices. Our analysis shows that the total number of measurements required to recover a short-ranged Hamiltonian to a fixed accuracy is in fact system-size independent. Importantly, it provides an accurate numerical estimate for the optimal measurement time based on a rough prior guess of the Hamiltonian. 


\paragraph*{Problem setting.}
The goal of the scheme we analyze is to recover short-ranged Hamiltonians from their short-time dynamics \cite{DaSilva2011}. We say that a closed quantum system on a lattice is governed by a $k$-ranged Hamiltonian if each term in the Hamiltonian is contained in a cube of side length $k$. The procedure for learning a $k$-ranged Hamiltonian $H$ consists of (i) initializing the system in different product states, (ii) evolving the initial states for a short-time under $H$, and (iii) measuring the evolved states in various bases. Our goal is to reconstruct $H$ to a given accuracy using a minimal number of shots $N$, defined as the total number of experiments, from state preparation to measurement. Both the number of shots and the classical computation should scale favourably with the size of the system.

Any $k$-ranged Hamiltonian $H$ may be expanded in a basis $\lbrace{S_j \rbrace}$ for the space of $k$-ranged operators, 
\begin{equation}
    H=\sum_j c_j S_j,
\end{equation}
such that the Hamiltonian is determined by the coefficients $c_j$. The reconstruction of $H$ then boils down to finding the coefficient vector $\vec{c}$. For concreteness, we describe the reconstruction algorithm for spin $\nicefrac{1}{2}$ systems, where $S_j$ are products of Paulis acting on contiguous sites. While we describe it for closed systems, the algorithm can also recover the dynamics of a subsystem embedded in a larger system as in \iRef{Bairey2019}, as well as Markovian dissipative dynamics as in \Refs{Bairey2019a, Dumitrescu2019HamiltonianSystems}.

\paragraph*{Algorithm.}
The dynamics of an observable $A$ in an initial state $\ket{\psi}$ is governed by Ehrenfest's equation, 
\begin{equation} \label{Ehrenfest}
  \partial_t \ev{A}_\psi=i\ev{[H,A]}_\psi =i\sum_j c_j \ev{[S_j,A]}_\psi,  
\end{equation}
where $\ev{A}_{\psi} = \ev{\psi|A|\psi}$. For a known initial state $\ket{\psi}$ at $t=0$, \Eq{Ehrenfest} yields an  equation for the Hamiltonian coefficients $c_j$. We can get a set of equations using different pairs of initial states $\{\ket{\psi_\alpha}\}$ and observables ${\{
A_\beta\}}$. Denoting
\begin{equation} \label{K_element}
    b_{(\alpha,\beta)} \EqDef \partial_t \ev{A_\beta}_{\psi_\alpha}, \hspace{0.5cm}  K_{(\alpha,\beta),j} \EqDef i\ev{[S_j,A_\beta ]}_{\psi_\alpha},
\end{equation}
where we treat the pair $(\alpha,\beta)$ as a super index, \Eq{Ehrenfest} becomes
\begin{equation}
    b_{(\alpha, \beta)} = \sum_j K_{(\alpha, \beta), j} c_j,
\end{equation}
yielding a set of linear equations
\begin{equation} \label{linear_reconstruction}
\vec b=K\vec c.
\end{equation}
With sufficiently many initial states and observables, $K$ becomes full rank, and the Hamiltonian $\vec c$ is uniquely given by the solution $\vec{c}=K^+\vec{b}$, where $K^+=(K^TK)^{-1}K^T$ is the pseudo inverse of $K$. 

As can be seen in \Eq{linear_reconstruction}, the matrix $K$ maps Hamiltonians to observable dynamics. The matrix does not depend on the unknown Hamiltonian, and can be calculated in advance given a set of initial states and observables. The observable dynamics $\vec b$ contain information about $H$, and must be estimated from measurements to recover $H$. Each entry $b_{(\alpha, \beta)}$ is a derivative which can be approximated by a forward finite difference method,
\begin{equation} \label{finite_difference}
    \partial_t \ev{A}_{\psi}=\frac{\ev{A }_{\psi (\delta t)}-\ev{A}_{\psi (0)}}{\delta t}+\order{\delta t}.
\end{equation}
Starting with a known initial state, the only term we need to measure is $\ev{A}_{\psi(\delta t)}$, which we can evaluate by measuring $\ket{\psi(\delta t)}$ in a basis compatible with $A$, as we now explain.

To construct a set of equations, we initialize the system with random Pauli eigenstates and measure by projecting on random Pauli bases. We initialize
\begin{equation} \label{random_product_states}
    \ket{\psi(0)}=\prod_i \ket{\phi_i},
\end{equation} where each $\phi_i$ is chosen uniformly from one of the six $\pm1$ eigenstates of $X,Y$ and $Z$. We propagate $\ket{\psi}$ under $H$ for a time $\delta t$. We then measure each site of $\ket{\psi(\delta t)}$ randomly in one of the 3 Pauli bases $X,Y$ or $Z$. Namely, we measure $R^{\dagger} \ket{\psi (\delta t)}$ in the standard ($Z$) basis, where $R = \prod_i R_i$ and $R_i$ are chosen uniformly from $R_y (\nicefrac{\pi}{2}), R_x (\nicefrac{\pi}{2})$, and $\mathbbm{1}$. We call each choice of state preparation and measurement basis a SPAM setting, and denote the number of different SPAM settings used for reconstruction by $\Nspam$.

In a system with $n$ sites, each SPAM setting potentially contributes $2^n-1$ equations to the set (\ref{linear_reconstruction}), corresponding to $\partial_t \ev{A}_{\psi}$ for commuting Pauli observables $A$. For instance, consider a measurement in the $XZZX$ basis, which allows to estimate the single-site observables $\ev{X_1}$, $\ev{Z_2}$ and so on at $\ket{\psi(\delta t)}$. Moreover, it allows to estimate the two-site correlators $\ev{X_1 Z_2}$ as well as higher-order correlators. Fixing a set of $\Nspam$ different SPAM settings yields a system (\ref{linear_reconstruction}) with up to $\approx \Nspam \cdot 2^n$ equations. In practice, we only estimate observables up to a given range $k_{\beta}$, corresponding to $\approx n \cdot 2^{k_{\beta} - 1}$ equations per SPAM setting ($2^{k_{\beta} - 1}$ correlators extending from each site, up to boundary corrections). Collecting a total of $N$ shots distributed equally among the $\Nspam$ settings allows to estimate each of these equations to accuracy $\sim \frac{1}{\sqrt{N/\Nspam}} = \sqrt{\frac{\Nspam}{N}}$.




This scheme raises a few questions, which we now address:


\begin{enumerate}
\item What is the optimal measurement time $\delta t$?
\item How many experiments are needed to recover a short-ranged Hamiltonian to a given accuracy?
\item How does the number of required experiments depend on the system size?
\item Which observables should we estimate from each measurement setting?
\end{enumerate}

\paragraph*{Estimating the reconstruction error and optimal measurement time.} We start by simulating our reconstruction algorithm on spin chains with random $2$-ranged interactions, using single-site observables. To this end, we generated Hamiltonians with nearest neighbor Paulis $S_j$, 
\begin{equation} \label{H_rand}
    H_{random}=\sum_j c_j S_j,
\end{equation}
where $S_j \in \lbrace \sigma_i \otimes \sigma_{i+1} \rbrace$ form a basis for all  nearest neighbor Paulis, with  $\sigma _i\in\{\mathbb{1}_i,X_i,Y_i,Z_i\}$ a Pauli operator on site $i$. The coefficients were drawn from a Gaussian distribution with zero mean and unit variance $c_j\sim G(0,1)$, setting the time scale for what follows. We initialized each system in $\Nspam=648$ random product states [\Eq{random_product_states}], and evolved them for time $\delta t$. We measured each evolved state $\nicefrac{N}{\Nspam}$ times in a random measurement basis, and estimated the expectation values of all single-site Paulis $\ev{A_\beta}_{\psi\left(\delta t\right)}$ compatible with that basis from the measurement results. 
We then reconstructed the Hamiltonian coefficients $\vec{c}$ from \Eq{linear_reconstruction}, approximating the time-derivative of each observable using finite differences [\Eq{finite_difference}]. 

Reconstructing a random Hamiltonian using various evolution times $\delta t$ and a fixed measurement budget $N=10^6$, the reconstruction error was minimized at an optimal measurement time $\delta t^* \approx 0.02$ (\Fig{fig1}, left). We define the reconstruction error as the relative error of the reconstructed coefficients, 
\begin{equation} \label{reconstruction_error}
    \Delta \EqDef \frac{\norm{\delta \vec{c}}}{\norm{\vec{c}}}.
\end{equation}
Away from the optimal time, the reconstruction error increased significantly. 
Repeating the reconstruction with various measurement budgets $N$, the optimal measurement time $\delta t^*$ scaled as $N^{-\nicefrac{1}{4}}$ (\Fig{fig1}, right, dots). Can we predict the optimal measurement time $\delta t^*$ and its reconstruction error $\Delta^*$?

\begin{figure}  \includegraphics[width=8 cm]{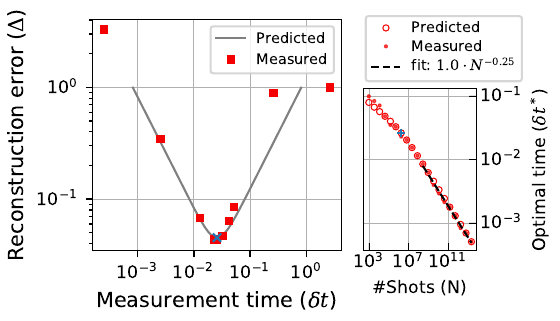} 
  \caption{Tradeoff between statistical and systematic errors determines optimal measurement time.
Left: reconstruction error $\Delta$ [\Eq{reconstruction_error}] of a random $2$-ranged Hamiltonian as a function of the measurement time $\delta t$ for a fixed measurement budget $N=10^6$. While the statistical error in the finite difference method decreases with $\delta t$, the systematic error increases with $\delta t$, dictating an optimal measurement time. A prior estimate of the reconstruction error (gray line), calculated using a noisy guess of the Hamiltonian [\Eq{guess}],  predicts the optimal time (blue 'x'). Time is measured in units of the variance of the Hamiltonain couplings [see \Eq{H_rand}]. Right: optimal measurement time $\delta t^*$ as a function of the number of shots $N$ (dots measured, circles predicted). The dashed line is a fit to the asymptotic scaling $\delta t^*\sim N^{-\gamma}$ with $\gamma = {1/4}$.
The reported values average over 10 different realizations of the statistical estimation error for the observables; errorbars are smaller than the markers.} 
 \label{fig1}
\end{figure}

The accuracy of the Hamiltonian reconstruction from \Eq{linear_reconstruction} is determined by the estimation accuracy of $\vec{b}$. In any experiment, instead of the exact $\vec{b}$ we can only obtain a noisy estimate $\vec{b} + \delta \vec{b}$. The estimation error $\delta \vec{b}$ corrupts the recovered coefficients, given by $\vec{c} + \delta \vec{c} = K^+ \left( \vec{b} + \delta \vec{b} \right)$. This leads to an error in the reconstructed Hamiltonian, such that the deviation between the true and the recovered coefficients is given by $\delta \vec{c}= K^+ \delta \vec{b}$. 


The error $\delta \vec{b}$ in estimating the time-derivatives from \Eq{finite_difference} arises due to two factors: the finite measurement budget (number of experimental shots) and the finite-time approximation of the derivative. The finite sampling of $\frac{\ev{A(\delta t)}}{\delta t}$ leads to statistical error which decreases with the number of experiments and measurement time, $\norm{\delta \vec{b}_{stat}} \sim \frac{1}{\sqrt{N} \delta t}$ (for a fixed $\Nspam$). The finite-time approximation of the derivative leads to a systematic error which increases with time, $\norm{\delta \vec{b}_{sys}} \sim \delta t$. The optimal measurement time is determined by a tradeoff between those two factors. The total estimation error $\delta \vec{b} = \delta \vec{b}_{stat} + \delta \vec{b}_{sys}$ is minimized when they are both of comparable magnitude $\frac{1}{\sqrt{N} \delta t} \sim \delta t$, which occurs at an optimal measurement time scaling with the number of experiments as $\delta t ^* \sim N^{-\nicefrac{1}{4}}$. 

The $N^{-\nicefrac{1}{4}}$ scaling can be undertstood intuitively as follows. To achieve a reconstruction which is twice more accurate, the total estimation error must be reduced by a factor of $2$. This requires measuring at a time shorter by a factor of $2$; however, to reduce the statistical error $\norm{\delta \vec{b}_{stat}} \sim \frac{1}{\sqrt{N} \delta t}$ by a factor of $2$ at this twice shorter measurement time, the measurements must be $4$ times more accurate, which requires a number of shots  $16$ times larger.

To estimate the optimal $\delta t^*$ more accurately, we analyze in \App{app:derivative_estimation} the contributions of the statistical and systematic errors to the reconstruction error,
\begin{equation}
     \norm{\delta \vec{c}} = \norm{K^+ \left(\delta \vec{b}_{stat} + \delta \vec{b}_{sys} \right)}.
 \end{equation}
Since the statistical error averages to zero, we find that these contributions factorize, and to leading order in $\delta t$,
\begin{align} \label{error_estimate}
\begin{split} 
    \mathbb{E} \norm{\delta \vec{c}}^2 &= \mathbb{E} \norm{K^+ \delta \vec{b}_{stat} }^2 + \norm{K^+ \delta \vec{b}_{sys} }^2 \\
    &\approx \frac{a_{stat}}{N (\delta t)^2} + a_{sys} (\delta t)^2,
\end{split}
\end{align}
for suitable constants $a_{stat}$ and $a_{sys}$. The optimal time $\delta t^*$, and the corresponding optimal reconstruction error $\Delta^*$ are therefore approximated by
\begin{equation} \label{optimal-dt-first-order}
    \delta t^* \approx \left( \frac{a_{stat}}{a_{sys} N}\right)^{\frac{1}{4}}, \hspace{0.5cm} \Delta^* \approx \left(4 \frac{a_{stat} a_{sys}}{N}\right)^{\frac{1}{4}}/\norm{\vec{c}}.
\end{equation}

The statistical component $a_{stat}$ determines how finite sampling errors degrade the reconstruction. We find that it is given by the spectrum of $\frac{1}{\sqrt{\Nspam}} K$, which quantifies the sensitivity of the SPAM settings to all the Hamiltonian parameters, averaged over the $\Nspam$ SPAM settings,
\begin{equation} \label{stat_error}
    a_{stat} =  \Tr \left(\left( \frac{1}{\Nspam} K^T K\right)^{-1} \right).
\end{equation}

 

The systematic compenent of the error stems from the higher-order derivatives of the estimated observables. When the systematic error is calculated to leading order in $\delta t$, it is given by 
\begin{equation} \label{sys_error}
    a_{sys} = \norm{K^+ \partial_{tt} \ev{\vec{A}}|_{t=0}}^2,
\end{equation}
where $\partial_{tt} \ev{A} = -\ev{\com{H, \com{H, A}}}$. A more accurate estimate for long times (small $N$) can be calculated using higher orders derivatives (e.g. using $\partial_{ttt} \ev{A}$ as well, see \App{app:systematic-error-est}). Since the systematic error depends on the unknown Hamiltonian we wish to reconstruct, it must be estimated separately; for instance, by computing it with respect to a guess of the true Hamiltonian.

Remarkably, a rough guess of the true Hamiltonian provides an excellent estimate for the reconstruction error and optimal measurement time. We estimated these quantities for the experiments presented in \Fig{fig1}, using the second-order estimate for the systematic error $a_{sys}$ described in \App{app:systematic-error-est}. We assumed that the true Hamiltonian is known to an accuracy of $10\%$, and estimated $a_{sys}$ according to a perturbed version of the true Hamiltonian 
\begin{equation} \label{guess}
    H+H'=\sum_j c_j(1+ e_j)S_j,
\end{equation} where $e_j\sim N(0,{0.1}^2)$. Plugged in \Eq{optimal-dt-first-order}, this estimate reproduced the reconstruction error in \Fig{fig1} near its optimum to an excellent accuracy (\Fig{fig1}, left, gray line). The predicted optimal measurement time matched the empirical optimum even at moderate measurement budgets $N$. While it is tempting to improve the reconstruction using higher order finite difference methods \cite{Dumitrescu2019HamiltonianSystems}, our experiments found an advantage for these methods only at impractical measurement budgets $N \geq 10^{11}$ (see \App{app:higher_order}).


\begin{figure}  \includegraphics[width=8 cm]{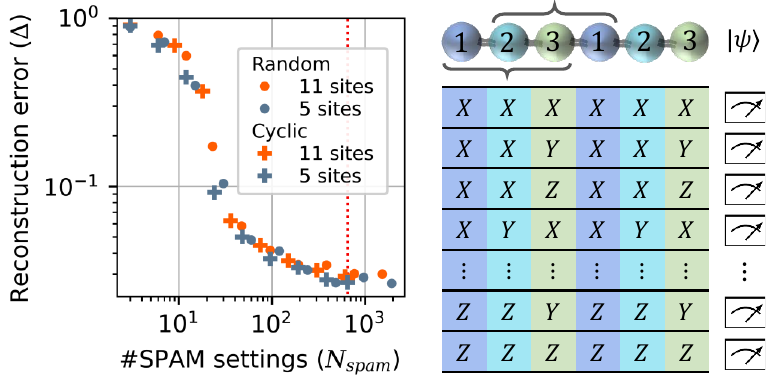} 
  \caption{Left: Reconstruction error as a function of the number of different state preparation and measurement (SPAM) settings for various system sizes and a fixed total number of shots ($N=8\cdot 10^6$). The reconstruction accuracy of both random SPAM settings (circles) and cyclic SPAM settings (crosses) does not depend on system size, and saturates with sufficiently many settings. The saturation point is predicted by the number of settings $\Nspam^*=6^3\cdot 3$ in the full set of cyclic settings (vertical dotted line); for $\Nspam<\Nspam^*$, the crosses correspond to random subsets of this set. Right: the measurements of the cyclic protocol (Overlapping Local Tomography) for the case of $k_\beta = 3$-ranged observables. Partition the lattice to unit cells of length $k_\beta$, then measure each unit cell in an informationally complete set of bases ($3^{k_\beta}$ in one dimension) to obtain its reduced density matrix, repeating the bases periodically across unit cells. The reduced density matrices of overlapping partitions (gray braces) are obtained for free, since the measurement bases of the different partitions are equal up to a permutation. Initial states in the cyclic SPAM scheme are chosen similarly.} 
 \label{fig2}
\end{figure}

\paragraph*{System size scaling.} So far, we numerically estimated the reconstruction error at a fixed number of SPAM settings $\Nspam=648$ and system size $n=7$. How many SPAM settings $\Nspam$ are required for reconstruction, and how does the measurement budget $N$ required for reconstruction scale with system size?

We repeated the reconstruction experiment of random $2$-ranged Hamiltonians [\Eq{H_rand}] with varying numbers of different SPAM settings $\Nspam$ and a constant total budget of measurements $N$. The reconstruction error saturated at $\Nspam^* \approx 10^3$ SPAM settings (\Fig{fig2}, left, dots). Crucially, both $\Nspam^*$ and the reconstruction error remained unchanged when we increased the system size from $n=5$ to $n=11$, keeping the total measurement budget fixed. The larger system yielded the same reconstruction error, saturating at the same number of SPAM settings.

To understand the saturation of the reconstruction error as a function of the number of SPAM settings $\Nspam$, we consider the $\Nspam\rightarrow \infty$ limit of all SPAM settings. Averaging over all product states and measurement bases, the correlation matrix $\frac{1}{\Nspam} K^T K$ converges to a diagonal matrix with system size independent entries (see \App{app:analytic_K}). The entry for each Hamiltonian term depends only on its support and on the set of measured observables. For example, if only single-site observables $A_\beta$ are used,
\begin{equation} \label{analytic_spectrum}
      \frac{1}{\Nspam} \left( K^T K \right)_{j,j'} \underset{p\rightarrow\infty}{\rightarrow} 4\cdot\frac{2k_j}{3^{k_j + 1}} \delta_{j,j'},
\end{equation}
where $k_j$ is the weight of the corresponding Hamiltonian term $S_j$, which is the size of its support. 


The insensitivity of the reconstruction to system size is apparent in the $\Nspam\rightarrow \infty$ average over SPAM settings [\Eq{analytic_spectrum}]. The system-size independence of the correlation matrix entries implies a system-size independent statistical error [\Eq{stat_error}] per Hamiltonian term, as the average singular value of $K$ becomes constant. The systematic error [\Eq{sys_error}] per Hamiltonian term  is also constant for short-ranged Hamiltonians, since each Hamiltonian term affects a fixed number of observables. 

More generally, the system-size independence follows from the local structure of the reconstruction algorithm, since the equations for each Hamiltonian term involve only observables in its vicinity \cite{Bairey2019}. This local structure also implies that the $\Nspam \rightarrow \infty$ limit of \Eq{analytic_spectrum} does not require all SPAM settings; only the local configuration of each initial state and measurement basis matters. To predict the number of SPAM configurations required to saturate the reconstruction error, we now devise a structured SPAM scheme. This structured scheme uses a finite number of 'cyclic' SPAM configurations which are equivalent to the $\Nspam \rightarrow \infty$ average over all configurations. 

The cyclic measurement bases of the structured scheme are chosen to estimate all observables of a constant range $k_\beta$ in a given initial state. We construct these measurement by partitioning the lattice to unit cells of length $k_\beta$, and iterating over all measurement configurations within a unit cell (\Fig{fig2}, right). The configurations are copied between unit cells, defining a set of $3^{k_\beta}$ measurement bases in one dimension, or $3^{k_\beta^D}$ bases in a $D$-dimensional lattice, independent of system size. Inspired by \iRef{Cotler2020QuantumTomography}, we call this measurement process 'Overlapping local tomography', since it characterizes all the $k_\beta$-ranged reduced density matrices of a given state.

The cyclic initial states are constructed similarly to the cyclic measurement bases. For the purposes of reconstructing a $k$-ranged Hamiltonian, cyclic initial states with periodicity $2k - 1$ are equivalent to the full set of initial product states. This periodicity is chosen to uniformly cover the shared support of every pair of overlapping Hamiltonian terms $S_i$, $S_j$ (see \App{app:analytic_K}). For the $k=2$-ranged Hamiltonians and $1$-ranged observables considered so far, this structured scheme consists of $\Nspam^* = 648=6^3 \cdot 3$ configurations, corresponding to the saturation point of \Fig{fig2}. These structured configurations yield a similar reconstruction accuracy as the random configurations (\Fig{fig2}, left, crosses).

\begin{figure}  \includegraphics[width=8 cm]{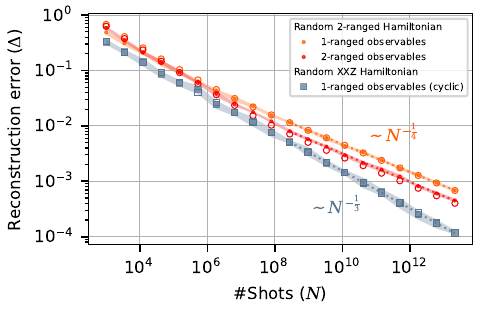} 
  \protect
  \caption{Reconstruction error as a function of the total number of shots. Warm colors: reconstruction of random $2$-ranged Hamiltonians using all observables up to range $k_\beta = 1$ (orange) and $k_\beta = 2$ (red). Addition of $2$-ranged observables improves the reconstruction accuracy at the same measurement budget $N$. Blue: reconstruction of random XXZ Hamiltonians using $1$-ranged observables in a cyclic measurement scheme with periodicity $2$. Symmetries of the XXZ model lead to an easier reconstruction, with error scaling asymptotically as $N^{-1/3}$ in contrast to the $N^{-1/4}$ scaling predicted for generic Hamiltonians.} 
 \label{fig3}
\end{figure}

\paragraph*{Choosing the observables.}
An important factor in the reconstruction is the choice of observables ${\{
A_\beta\}}$ used for building the linear equation \Eq{linear_reconstruction}. Each measurement setting can be used to estimate $2^n-1$ commuting Pauli observables. Our previous simulations used only single-site observables in the reconstruction. How does the choice of estimated observables affect the reconstruction quality?

To test this, we repeated our simulations with different choices of observables. We used a fixed dataset of experimental outcomes to estimate all observables up to range $k_{\beta}$ for $k_{\beta}=1,2,3$. While reconstruction with $k_{\beta}=1$ observables was improved by the addition of $k_{\beta}=2$ observables (\Fig{fig3}, warm colors), the addition of $k_{\beta}=3$ degraded the reconstruction (not shown). Correlations between estimated observables cause the degradation for $k_\beta = 3$, as well as the deviation between predicted and measured errors for $k_\beta=2$ (see \App{app:opt_time}); we conjecture that observables up to a range matching that of the Hamiltonian $k_{\beta}=k$ are optimal more generally.


\paragraph*{Symmetry eases Hamiltonian reconstruction.}
In contrast to the Hamiltonians we simulated so far, realistic Hamiltonians are not fully random and often have symmetries. How does symmetry affect the reconstruction quality and the accuracy of the our predictions? 

To address this question, we repeated our experiments with random XXZ Hamiltonians:
\begin{equation} \label{XXZ_hamil}
  H_{XXZ}=\sum_{i=0}^{n-1}c_i Z_i Z_{i+1}+a_i(X_i X_{i+1}+Y_i Y_{i+1}),
\end{equation}
where $c_i,a_i\sim N(0,1)$ are drawn from a Gaussian distribution. These $XXZ$ Hamiltonians are invariant under three anti-unitary symmetries $\forall i: \sigma_i^\alpha \mapsto -\sigma_i^\alpha$,  each of which flips one of the Bloch axes $\alpha$ for all spins $i$ \cite{meng-cheng}.
Interestingly, the reconstruction error of these Hamiltonians in a cyclic measurement scheme decreased more rapidly with the total number of shots $\Delta \sim N^{-1/3}$ compared to the naive prediction $N^{-1/4}$  of \Eq{optimal-dt-first-order} (\Fig{fig3}, blue). We find that for initial states that respect at least one of the anti-unitary symmetries of the Hamiltonian, the systematic error $a_{sys}$ [\Eq{sys_error}] vanishes to first order (see \App{app:xxz}). As a result, the systematic error scales as $\norm{\delta \vec{b}_{sys}}\sim \order{\delta t^2}$ instead of $\order{dt}$, explaining the $N^{-1/3}$ scaling of the reconstruction error.

\paragraph*{Conclusions.} We have shown that a system-size independent number of short-time measurements fully characterizes short-ranged Hamiltonians to any given accuracy. Fixing the total number of experimental shots, the error decreases with the number of experimental configurations up to a saturation point. While the error is very sensitive to the measurement time, the optimal time can be predicted using minimal prior knowledge.

Short-time measurements for Hamiltonian reconstruction benefit from computational and analytical tractability, as well as favorable system-size scaling. However, the reconstruction scales poorly with the number of shots, since the derivative estimation requires evolution times which decrease with the desired reconstruction accuracy. Longer measurement times can be utilized by substituting the differential Ehrenfest equation [\Eq{Ehrenfest}] by an integral equation for the dynamics of observables over a time period. Altenatively, a higher-order Taylor expansion of the observable dynamics would lead to a non-linear set of equations, allowing longer measurement times at the cost of higher computational effort as in the recent \iRef{Haah2021OptimalStates}. The optimal measurement time for these higher order methods can be estimated using extensions of the method we presented here.

Alternatively, recent works discuss a scheme for dynamical Hamiltonian learning which is based on on energy conservation \cite{Bentsen2019IntegrableCavity, Li2019}. These works look for a local observable which does not change with time. In the absence of other local symmetries, the only conserved observable is the Hamiltonian. Thus, each initial state or measurement time gives a single linear equation for the Hamiltonian, such that $\order{n}$ suffice to uniquely identify it. The equations remain computationally tractable for long evolution times, increasing their sensitivity at a fixed system size. In our short-time approach, each measurement settings yields not one but $\order{n}$ local equations for the Hamiltonian, providing an advantage at large system sizes.


\begin{acknowledgments}
We thank Itai Arad, Barak Gur and Tasneem Watad for useful discussions.  We acknowledge financial support from the
 European Research Council (ERC) under the European Union Horizon 2020 Research and Innovation Programme (Grant Agreement No. 639172), the Defense Advanced Research Projects Agency through the DRINQS program, grant No. D18AC00025, and from the Israel Science Foundation within the ISF-Quantum program (Grant No. 2074/19). The content of the information presented here does not necessarily reflect the position or the policy of the U.S. government, and no official endorsement should be inferred. 
\end{acknowledgments}

\bibliography{references}


\renewcommand{\thefigure}{S\arabic{figure}}
\appendix

\section{Derivative estimation}\label{app:derivative_estimation}

\subsection{Full derivation of the optimal measurement time} \label{app:opt_time}
\subsubsection{Full derivation of $\mathbb{E} \norm{\delta\vec{c}}$ bound} 

We want to estimate how $\delta t$ affects $\mathbb{E}\norm{\delta\vec{c}}$. The effect of $\delta t$ on $\delta \vec{b}$ can be written as sum of two contributions $\delta\vec{b}=\delta\vec{b}_{sys}+\delta\vec{b}_{stat}$ where
\begin{equation} \label{b_sys}
    \delta b_{sys_{\left(\alpha, \beta\right)}}=b_{\left(\alpha, \beta\right)} -\frac{\ev{A_\beta }_{\psi_\alpha (\delta t)}-\ev{A_\beta}_{\psi_\alpha (0)}}{\delta t},\
\end{equation}
with expectation values evaluated exactly (in the limit of infinite shots), is the systematic error of the estimation (\Eq{finite_difference}) and 
\begin{equation} \label{b_stat}
\delta b_{stat_{\left(\alpha, \beta\right)}} = \frac{\delta \ev{A_\beta }_{\psi_\alpha (\delta t)}}{\delta t}\sim G\left(0, \frac{1}{{\delta t}^2}\frac{\Nspam}{N}\right),
\end{equation}
with expectation values estimated using $\frac{N}{\Nspam}$ shots, is the statistical error due to finite sampling. Writing $\delta \vec{c} = K^+\delta \vec{b} = K^+\left( \delta \vec{b}_{stat} +\delta \vec{b}_{sys} \right) $ as sum of two independent errors allows us to find a value for $\mathbb{E}\norm{\delta\vec{c}}^2$ using the following lemma on the expectation value of a quadratic form.

\begin{lemma}
Let $\vec{x}$ be a random vector with mean $\vec{\mu}$ and covariance matrix $\Sigma$ then for any $A\in M_{n,m}(\mathbb{R})$,  $\mathbb{E}\norm{A\vec{x}}^2 = \Tr\left(\Sigma A^TA\right)+\norm{A\vec{\mu}}^2$.
\end{lemma}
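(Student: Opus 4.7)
The plan is to expand $\mathbb{E}\norm{A\vec{x}}^2$ as the expectation of a scalar quadratic form in $\vec{x}$, and then to separate the deterministic contribution coming from the mean from the stochastic contribution coming from the fluctuations. First I would rewrite $\norm{A\vec{x}}^2 = \vec{x}^T A^T A \vec{x}$, and decompose $\vec{x} = \vec{\mu} + \vec{y}$ where $\vec{y} \EqDef \vec{x} - \vec{\mu}$ is a zero-mean random vector whose covariance matrix is $\Sigma$ by definition.

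Next I would expand the quadratic form using this decomposition:
\begin{equation}
(\vec{\mu} + \vec{y})^T A^T A (\vec{\mu} + \vec{y}) = \vec{\mu}^T A^T A \vec{\mu} + 2 \vec{\mu}^T A^T A \vec{y} + \vec{y}^T A^T A \vec{y}.
\end{equation}
Taking expectations, the linear cross term vanishes since $\mathbb{E}\vec{y} = \vec{0}$, while the first term is deterministic and already equals $\norm{A\vec{\mu}}^2$. This reduces the lemma to computing $\mathbb{E}[\vec{y}^T A^T A \vec{y}]$.

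The final step is the standard trace trick: since $\vec{y}^T A^T A \vec{y}$ is a scalar, it equals its own trace, and by cyclicity $\vec{y}^T A^T A \vec{y} = \Tr(A^T A \vec{y} \vec{y}^T)$. Linearity of trace and expectation then give $\mathbb{E}[\vec{y}^T A^T A \vec{y}] = \Tr(A^T A \, \mathbb{E}[\vec{y}\vec{y}^T]) = \Tr(A^T A \Sigma) = \Tr(\Sigma A^T A)$, where the last equality again uses cyclicity. Adding this to $\norm{A\vec{\mu}}^2$ yields the claim.

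There is no real obstacle here: the proof is a two-line manipulation. The only point worth stating carefully is that $\mathbb{E}[\vec{y}\vec{y}^T] = \Sigma$ follows directly from the definition of the covariance matrix of $\vec{x}$, and that no assumption on the distribution of $\vec{x}$ beyond the existence of its first two moments is needed.
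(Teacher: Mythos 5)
Your proof is correct and follows essentially the same route as the paper's: both hinge on the trace--cyclicity trick $\mathbb{E}[\vec{v}^T A^T A \vec{v}] = \Tr(A^T A\, \mathbb{E}[\vec{v}\vec{v}^T])$ together with the definition of the covariance matrix. The only cosmetic difference is that you center $\vec{x} = \vec{\mu} + \vec{y}$ before applying the trick, whereas the paper applies it directly to $\vec{x}$ and then substitutes $\mathbb{E}[\vec{x}\vec{x}^T] = \Sigma + \vec{\mu}\vec{\mu}^T$.
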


\begin{proof}

Note that 
\[\norm{A\vec{x}}^2=\vec{x}^TA^TA\vec{x}\]
Because
$\norm{A\vec{x}}$ is a scalar, we can write \[\norm{A\vec{x}}^2=\Tr\left(\vec{x}^TA^TA\vec{x}\right)\]
Using the cyclic property of the trace \[\norm{A\vec{x}}^2=\Tr\left(A^TA\vec{x}\vec{x}^T\right)\]

As trace and matrix multiplication are essentially additions and scalar multiplications we can write \[\mathbb{E}\Tr\left(A^TA\vec{x}\vec{x}^T\right)=\Tr\left(A^TA\mathbb{E}\left(\vec{x}\vec{x}^T\right)\right)\]
Using the definition of the covariance matrix \[\mathbb{E}\norm{A\vec{x}}^2 = \Tr\left(A^TA\left(\Sigma+\vec{\mu}\vec{\mu}^T\right)\right)\] and from linearity of the trace \[\mathbb{E}\norm{A\vec{x}}^2 = Tr\left(A^TA\Sigma\right)+\Tr\left(A^TA\vec{\mu}\vec{\mu}^T\right)\] Again from the cyclic properties of the trace and the fact that $\Tr\left(\vec{\mu}^TA^TA\vec{\mu}\right)=\norm{A\vec{\mu}}^2$ (as it is a scalar):
\[\mathbb{E}\norm{A\vec{x}}^2 = \Tr\left(\Sigma A^TA\right)+\norm{A\vec{\mu}}^2\]

\end{proof}

In our case $\delta\vec{b}$ is a random vector with mean $\delta\vec{b}_{sys}$ and covariance matrix $\Sigma_{\delta \vec{b}_{stat}}= \mathbb{E}\delta \vec{b}_{stat}\delta \vec{b}_{stat}^T$.
Using that lemma for $\mathbb{E}\norm{\delta \vec{c}}^2=\mathbb{E}\norm{K^+\delta \vec{b}}^2$ we can find a bound on $\mathbb{E}\norm{\delta \vec{c}}$ using Jensen inequality $\mathbb{E}\norm{\delta \vec{c}}\leq\sqrt{\mathbb{E}\norm{\delta \vec{c}}^2}$:
\[\mathbb{E}\norm{\delta \vec{c}}\leq\sqrt{\Tr\left(\mathbb{E}\left(\delta\vec{b}_{stat}{\delta\vec{b}_{stat}}^T\right) {K^+}^TK^+\right)+\norm{K^+\delta\vec{b}_{sys}}^2}.\]
Assuming each entry of $\delta\vec{b}_{stat}$ is independent random variable with variance $\sigma_{(\alpha,\beta)}^2$ and zero mean we get that
\begin{equation} \label{delta-c-bound}
    \mathbb{E}\norm{\delta \vec{c}}\leq \sqrt{\Tr\left(\Sigma_{\delta \vec{b}_{stat}} {K^+}^TK^+\right)+\norm{K^+\delta\vec{b}_{sys}}^2},
\end{equation} where $\Sigma_{\delta \vec{b}_{stat}}$ is diagonal with elements $\sigma_{(\alpha,\beta)}^2$.
Assuming $\sigma_{(\alpha,\beta)}\sim \left(\frac{N}{\Nspam}\right)^{-\nicefrac{1}{2}}\frac{1}{\delta t}$ we can define \[{a}_{stat}=\Tr\left(\Sigma_{\delta \vec{b}_{stat}}^* \Nspam {K^+}^TK^+\right)\] where $\Sigma_{\delta \vec{b}_{stat}}^*=\frac{N}{\Nspam}{\delta t}^2\Sigma_{\delta \vec{b}_{stat}}$ is a diagonal matrix which is independent of $N, p, \delta t$ and \[a_{sys}=\frac{1}{{\delta t}^2}\norm{K^+\delta\vec{b}_{sys}}^2\]
Assuming $\delta\vec{b}_{sys}\sim \delta t$ (first order estimation of the systematic error) we get \Eq{error_estimate}, and if we assume that $\Sigma_{\delta \vec{b}_{stat}}^*=\mathbb{1}$ (i.e. all the observables have the same statistical error magnitude) using the cyclic property of the trace we can write ${a}_{stat}=\Tr\left( \Nspam  \left(K^TK\right)^{-1}\right)$.

In practice, the standard deviations varies between observables depending on the initial state. Moreover, when different observables are estimated from the same set of measurements, off-diagonal entries appear in the covariance matrix. These correlations explain the deviation between the predicted and observed accuracies in \Fig{fig3}. They can be taken into account in order to improve the prediction of the reconstruction, and to predict the optimal set of estimated observables.

\subsubsection{Estimating the optimal $\delta t$}

In \Eq{error_estimate} and subsequent derivations, we have assumed that the statistical error scales as $\delta\vec{b}_{stat}\propto {\left(\frac{N}{\Nspam}\right)}^{-\nicefrac{1}{2}}\delta t^{-1}$ and the systematic error as $\delta\vec{b}_{sys}\propto \delta t$. We can examine a more general case with  \begin{gather*} 
\delta \vec{b}_{stat}=\vec{a}_{stat}{\left(\frac{N}{\Nspam}\right)}^{-\nicefrac{1}{2}}{\delta t}^{-\gamma} \\
\delta \vec{b}_{sys}=\vec{a}_{sys}{\delta t}^{\lambda}+\order{\delta t^{\lambda+1}}
\end{gather*}
for $\lambda, \gamma \ne 1$. $\vec{a}_{stat},\vec{a}_{sys}$ are defined such that they are independent of $\delta t$.
The optimal $\delta t$ is achieved when $\mathbb{E}\norm{\delta \vec{c}}$ is minimal. As an approximation, we use the bound $\norm{\delta c}_{bound}=\sqrt{\Tr\left(\Sigma_{\delta \vec{b}_{stat}} {K^+}^TK^+\right)+\norm{K^+\delta\vec{b}_{sys}}^2}$ of \Eq{delta-c-bound}, 
and minimize it with respect to $\delta t$. To leading order in $\delta t$ we get
\begin{multline*}
\norm{\delta c}_{bound}^2=\frac{{\delta t}^{-2\gamma}}{N} \Tr\left(\Sigma_{\vec{a}_{stat}}{K^+}^TK^+\right)\\
+{\delta t}^{2\lambda}\norm{K^+\delta\vec{a}_{sys}}^2.
\end{multline*}
The derivative is given by 
\begin{multline*}
\dfrac{d}{d(\delta t)}\norm{\delta c}_{bound}^2=-2\frac{\gamma {\delta t}^{-2\gamma-1}}{N} \Tr\left(\Sigma_{\vec{a}_{stat}}{K^+}^TK^+\right)\\
+2\lambda{\delta t}^{2\lambda-1}\norm{K^+\delta\vec{a}_{sys}}^2
\end{multline*}
The minimum is achieved when the derivative vanishes, which occurs at
\begin{equation} \label{optimal-dt-generalized}
\delta t^* = \left(\frac{1}{N}\cdot\frac{\gamma\Tr\left(\Sigma_{\vec{a}_{stat}}{K^+}^TK^+\right)}{\lambda\norm{K^+\delta\vec{a}_{sys}}^2}\right)^{\nicefrac{1}{2\left(\gamma+\lambda\right)}},
\end{equation}
which corresponds to reconstruction error
\begin{multline}\label{optimal-rec-generalized}
\norm{\delta c^*}_{bound}=\left(\sqrt{\frac{1}{1-\kappa}}\frac{1}{\sqrt{N}}\sqrt{\Tr\left(\Sigma_{\vec{a}_{stat}}{K^+}^TK^+\right)}\right)^{1-\kappa} \\
\cdot\left(\sqrt{\frac{1}{\kappa}}\norm{K^+\delta\vec{a}_{sys}}\right)^\kappa,
\end{multline}
where $\kappa = \frac{\gamma}{\lambda+\gamma}$. This result generalizes \Eq{optimal-dt-first-order}, where $\gamma=\lambda=1$ and $\kappa=\nicefrac{1}{2}$.

\subsection{Estimation used in the Hamiltonian reconstruction simulations}\label{app:systematic-error-est}
In our simulations, the systematic error due to the finite derivative method was estimated using a perturbed version of the real Hamiltonian, used to calculate the higher derivatives $\partial_{tt} \ev{A_j}_{t=0}$ and $\partial_{ttt} \ev{A_j}_{t=0}$ of the observables.
The systematic error is defined as
\[
a_{sys} = \norm{K^+ \left( \partial_{t}\ev{\vec{A}}|_{t=0}-\frac{\ev{\vec{A}}_{\delta t}-\ev{\vec{A}}_{0}}{\delta t}  \right)}^2
.\]
Writing $\ev{\vec{A}}_{\delta t}$ as a Taylor expansion around $t=0$ gives:
\begin{multline}
\ev{\vec{A}}_{\delta t}=\ev{\vec{A}}_{0}+\delta t \partial_{t}\ev{\vec{A}}|_0+\frac{{\delta t}^2}{2} \partial_{tt}\ev{\vec{A}}|_0\\
+\frac{{\delta t}^3}{6} \partial_{ttt}\ev{\vec{A}}|_0+\order{{\delta t}^4}
\end{multline}
Then for first order in $\delta t$ the systematic error is:
\[
a_{sys} = \norm{K^+ \left( \frac{1}{2}\partial_{tt} \ev{\vec{A}}|_{t=0} \right)}^2
\]
and for second order in $\delta t$:
\begin{equation}\label{a-sys-2nd-order}
a_{sys} = \norm{K^+ \left( \frac{1}{2}\partial_{tt} \ev{\vec{A}}|_{t=0} + \frac{\delta t}{6}\partial_{ttt}\ev{\vec{A}}|_{t=0} \right)}^2
\end{equation}

Evaluating this second order estimate for $a_{sys}$ requires to know $\delta t$ which is our ultimate goal to find. In order to bypass this difficulty, we used the first order estimate of $a_{sys}$ in \Eq{optimal-dt-first-order} to get a first order estimate of $\delta t$, and used that estimate in the second order estimate of $a_{sys}$ to get a better prediction of $\delta t$.

\begin{figure}  \includegraphics[width=8.6cm]{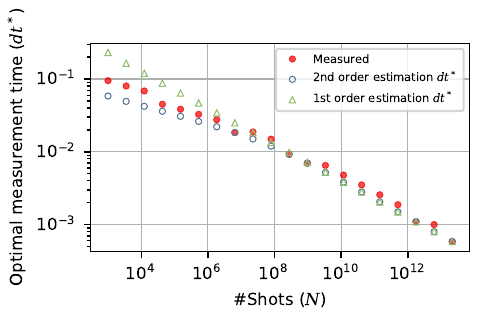} 
  \protect
  \caption{Predicting the optimal measurement time using first order (green hollow triangles, \Eq{sys_error}) or second order (blue circles, \Eq{a-sys-2nd-order}) methods in a random 2-local Hamiltonian compared to the measured optimal time (red dots). Here we used $2$-ranged observables in a randomized measurement scheme.}
 \label{fig7}
\end{figure}

\subsection{Higher order finite difference approximations} \label{app:higher_order}
Examining the analytical results we got for the reconstruction error bound in \Eq{optimal-rec-generalized} we can try to improve the reconstruction using higher-order estimates for the derivative. Instead of using the finite difference method with a single forward point (\Eq{finite_difference}), we can use forward finite difference method with $\lambda>1$ points
\begin{equation} \label{finite_difference_general}
    \partial_t \ev{A}_{\psi}=\frac{\sum_{r=0}^{\lambda}w_{r}\ev{A }_{\psi (r\cdot \delta t)}}{\delta t}+\order{\delta t^\lambda}
\end{equation}
where $w_r$ are the weights of each point such that the estimation error is $\order{\delta t^\lambda}$.
In that case we will achieve better asymptotic behaviour with respect to the number of measurements (as $\norm{\delta c^*}_{bound}\propto \sqrt{N}^{\frac{\lambda}{\lambda+1}}$, \Eq{optimal-rec-generalized}) but the measurement budget will split between $\lambda$ measurement times. The measurement budget at different times can allocated wisely by finding a set $\{N_r\}_{r=1}^{\lambda};\sum_{r=1}^\lambda N_r=N$ which minimizes the variance of the statistical error $\sigma^2=\sum_{r=1}^{\lambda} {w_r}^2/{N_r}$, which is achieved for
\begin{equation}
N_r=N\frac{\abs{w_r}}{\sum_{r=1}^{\lambda}\abs{w_r}}.
\end{equation}
Hence, the $N$ used in the statistical error should be replaced with $N_{effective}=N\left(\frac{1}{\sum_{r=1}^{\lambda}\abs{w_r}}\right)^2$, 
which yields an additional $\left(\sum_{r=1}^{\lambda}\abs{w_r}\right)^{\frac{\lambda}{\lambda+1}}$ factor to the reconstruction error bound because of the statistical error (\Eq{optimal-rec-generalized}). Additional factors can arise due to the systematic error.

We examined the results of reconstruction using finite difference with $\lambda=2$ uniformly spaced forward points
\begin{equation} \label{finite_difference_3_points}
    \partial_t \ev{A}_{\psi}=\frac{-\frac{3}{2}\ev{A}_{\psi(0)}+2\ev{A}_{\psi(\delta t)}-\frac{1}{2}\ev{A}_{\psi(2\delta t)}}{\delta t}+\order{\delta t^2}.
\end{equation}
Comparing the results we got using 1 forward point finite difference (\Eq{finite_difference}) to the results using 2 forward points finite difference (\Eq{finite_difference_3_points}) we observed that the scaling advantage from the the 2 forward points estimation method yields better reconstruction only for $N>10^{11}$ as can be seen in \Fig{fig6}. 
\begin{figure}  \includegraphics[width=8.6cm]{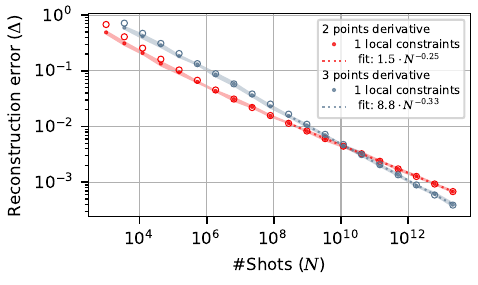} 
  \protect
  \caption{Reconstruction error of a random 2-local Hamiltonian as a function of the total number of shots using different methods of derivative estimation: 2 points forward finite difference (blue dots, \Eq{finite_difference}) and 3 points forward finite difference (red dots, \Eq{finite_difference_3_points}).}
 \label{fig6}
\end{figure}

\section{Analytic estimate of the statistical reconstruction error} \label{app:analytic_K}

A unique reconstruction requires a set of initial states and measurement bases leading to a full rank reconstruction matrix $K$. More generally, the choice of experiments determines the spectrum of $K^T K$, which quantifies the sensitivity of the experiments to the Hamiltonian terms through the statistical error of \Eq{stat_error}. We now derive the analytical form of $K^T K$ when averaged over all experimental configurations, and show that it is sufficient to average over a finite set of local configurations.

For a given set of initial states $\ket{\psi_\alpha}$ and observables $A_\beta$,
\begin{equation} \label{all_p_KTK}
(K^T K)_{ij}=-\sum_{\alpha,\beta}\ev{i[S_i,A_\beta ] }_{\psi_\alpha}\ev{i[S_j,A_\beta ] }_{\psi_\alpha},
\end{equation}
by the definition of $K$ [\Eq{K_element}]. Here, $A_\beta$ denote all the observables compatible with the measurement basis for the initial state $\ket{\psi_\alpha}$ up to the chosen locality. When averaging over all possible configurations, each of the $6^n$ initial states is measured in $3^n$ measurement bases. However, for a given initial state $\ket{\psi_\alpha}$, each observable $A_\beta$ is compatible with many measurement bases. An observable with support $s \subseteq [1,\dots,n]$ is compatible with $\frac{1}{3^{\abs{s}}}$ of all measurement bases; for instance, $X_1$ is compatible with $\frac{1}{3}$ of all bases. Therefore, the average $\frac{1}{\Nspam} (K^T K)_{ij}$  over all configurations $\Nspam=\frac{1}{3^n \cdot 6^n}$ takes the form
\begin{equation} \label{average_gram}
\sum_{\beta} \frac{1}{3^{\abs{s_\beta}}} \sum_{\alpha}  \frac{1}{6^n} \ev{i[S_i,A_\beta ] }_{\psi_\alpha}\ev{i[S_j,A_\beta ] }_{\psi_\alpha}.
\end{equation}
where $\beta$ runs over all estimated observables. We will now show that this average
\begin{enumerate}
    \item Vanishes for off-diagonal elements where $S_i \neq S_j$.
    \item For the diagonal element corresponding to $S_i$, it counts the number of observables $A_\beta$ that fail to commute with $S_i$, normalized by the weight $s_\beta$ of $A_\beta$ and the weight ${s_{i,\beta}}$ of the commutator $\com{S_i, A_\beta}$.
    \item Depends only on initial state and measurement configurations on sites supported by $S_i$ or $S_j$.
    \item Vanishes identically for \emph{any} initial state $\ket{\psi_\alpha}$ when $S_i, S_j$ do not overlap at least on one site, so that averaging over $2k_h - 1$-local configurations suffices.
\end{enumerate}
For the first two claims, we invoke the following formula for product state averages of Pauli strings \cite{Li2019}:
\begin{equation} \label{pauli_average}
    \frac{1}{6^n} \sum_{\alpha} \ev{P_r}_{\psi_\alpha}\ev{P_r'}_{\psi_\alpha} = \frac{1}{3^{\abs{s_r}}} \delta_{r,r'},
\end{equation}
which states that the average over all initial states vanishes whenever the two Pauli strings are different, and decays with their support $\abs{s_r}$ otherwise. In our case, $i \com{S_i, A_\beta} = 2 P_{r_{i,\beta}}$ is a Pauli string since both $S_i$ and $A_\beta$ are.

\paragraph*{Off-diagonal terms vanish.} Considering \Eq{pauli_average}, it is sufficient to show that  different Hamiltonian terms $S_i \neq S_j$ have different commutators with a shared Pauli observable $ [S_i,A_\beta ] \neq [S_j,A_\beta ]$, assuming that both commutators are non-zero. To see this, we recall that any pair of Paulis either commutes or anticommutes. Therefore, if $\com{S_i, A_\beta} \neq 0$, we can write $\com{S_i, A_\beta} = \theta_{i,\beta} S_i A_\beta$ for some $\theta_{i, \beta} \neq 0$, and similarly for $S_j$. If $\com{S_i, A_\beta} = \com{S_j, A_\beta}$, then $\theta_{i,\beta} S_i A_\beta = \theta_{j,\beta} S_j A_\beta$, and $S_i = \theta_{i, \beta}^{-1} \theta_{j, \beta} S_j$, in contradiction to our assumption.

\paragraph*{Diagonal elements count non-commuting observables.} For diagonal elements $i=j$, \Eq{pauli_average} gives a positive contribution whenever $\com{S_i, A_\beta} \neq 0$. The average over initial states $\alpha$ yields a contribution $\frac{4}{3^{\abs{s_{i,\beta}}}}$ which depends only on the size of the support $s_{i,\beta}$ of the commutator $\com{S_i, A_\beta}$, such that \Eq{average_gram} becomes
\begin{equation} \label{diagonal_entries}
\sum_{A_\beta: \com{S_i, A_\beta} \neq 0} \frac{4}{3^{\abs{s_\beta} + \abs{s_{i,\beta}}}}, 
\end{equation}
where the sum runs over all observables $A_\beta$ up to the chosen observable locality that fail to commute with $S_i$, and the factor 4 comes from the two commutators. For example, for single-site $S_i$ and $A_\beta$ we get $4\cdot \frac{2}{9}$, since $\frac{2}{3}$ of all single-site Paulis $A_\beta$ fail some commute with a single-site Hamiltonian Pauli $S_i$; and $\frac{1}{3}$ of the Pauli measurement bases match each given commutator.

More generally, since our measurement bases average equally over all observables $A_\beta$ with the same support, the value of \Eq{average_gram} depends only on the support of $S_i$, i.e. its locality and range. It can be calculated explicitly for a given set of estimated observables $A_\beta$.

\paragraph*{Only local configurations matter.} We now show that for each $i,j$, \Eq{average_gram} depends only on local configurations within the support of $S_i$ and $S_j$. Namely, it does not depend on the initial state configuration on sites supported by the observable $A_\beta$ but not by any of the Hamiltonian terms. We assume that all observables $A_\beta$ with a given support are sampled equally, as in the overlapping local tomography ('cyclic') measurement scheme.

Consider a fixed initial state $\ket{\psi_\alpha}$, and the average over all observables $A_\beta$ with a given support $s$:
\begin{equation} \label{local_average}
    \frac{1}{3^{\abs{s}}} \sum_{A_\beta: \text{supp}(A_\beta) = s} \ev{i[S_i,A_\beta ] }_{\psi_\alpha}\ev{i[S_j,A_\beta ] }_{\psi_\alpha}.
\end{equation}
The Paulis of $A_\beta$ on the sites are untouched by $S_i$ or $S_j$ factor out of both commutators. When we average over all observables with the same support, only a single observable configuration on those sites matches the initial state $\ket{\psi_\alpha}$, regardless of the initial state or Hamiltonian terms. Formally, we split $A_\beta$ to the part that intersects $S_i$ or $S_j$ and the part that does not,
\begin{equation}
    A_\beta = A_{\beta}^{s \cap s_{ij}} \cdot A_{\beta}^{s \setminus s_{ij}} = \prod_{i=1}^{\abs{s \cap s_{ij}}} \sigma_{m_i}^{\beta_{m_i}} \prod_{i=\abs{s \cap s_{ij}}+1}^{\abs{s}} \sigma_{m_i}^{\beta_{m_i}}
\end{equation}
where $m$ is a site index, $\beta_m \in \lbrace X,Y,Z \rbrace$ is a Pauli index, and $s_{ij} \EqDef \text{supp}(S_i) \cup \text{supp}(S_j)$. We can then write \Eq{local_average} as
\begin{align}
    \frac{1}{3^{\abs{s}}}\sum_{\beta_{m_1},\dots,\beta_{m_{\abs{s \cap s_{ij}}}}} & \ev{i[S_i,A_{\beta}^{s \cap s_{ij}} ] }_{\psi_\alpha}\ev{i[S_j,A_{\beta}^{s \cap s_{ij}} ]}_{\psi_\alpha}  \cdot \\
     & \cdot \sum_{\beta_{\abs{s \cap s_{ij}} + 1},\dots, \beta_{\abs{s}}} \ev{A_{\beta}^{s \setminus s_{ij}}}_{\psi_\alpha}^2.
\end{align}
For any given initial state, the second sum over all observable configurations that are untouched by $S_i$ or $S_j$ gives 1. Since the first sum depends only on the initial state configurations in $s \cap s_{ij}$, it is sufficient to average over those.

\paragraph*{Distant off-diagonal terms vanish for any product state.} We now show that $(K^T K)_{ij}$ vanishes for each of our initial states whenever $S_i, S_j$ do not overlap. For a given Pauli observable $A_\beta$, we done its part that intersects $S_i$ by $A_\beta^{s_i}$; namely, we partition $A_\beta$ to
\begin{equation}
    A_\beta = A_\beta^{s_i} \cdot A_\beta^{s \setminus s_i}.
\end{equation}
If $A_\beta$ commutes with $S_i$, its contribution to \Eq{average_gram} vanishes. To have a non-vanishing contribution, it must anticommute with $S_i$, since Pauli strings either commute or anticommute. In this case, we must have that $A_\beta^{s_i}$ anticommutes with $S_i$, since $A_\beta^{s \setminus s_i}$ trivially commutes with it. However, $A_\beta^{s_i}$ commutes with $S_j$, since they do not overlap. Therefore, since expectation values in product states factorize,
\begin{align}
    & \ev{i[S_i,A_\beta ] }_{\psi_\alpha}\ev{i[S_j,A_\beta ] }_{\psi_\alpha} = \\ 
    &= \ev{i[S_i,A_\beta^{s_i}]} \ev{A_\beta^{s\setminus s_i}} \ev{A_\beta^{s_i}}    \ev{i[S_i,A_\beta^{s \setminus s_i}]}.
\end{align}
This product includes the expectation values of anticommuting Pauli strings:  $A_\beta^{s_i}$ anticommutes with $i\com{S_i,A_\beta^{s_i}}$, since we assumed it anticommutes with $S_i$. As our initial states are stabilizer states, their expectation value cannot be non-zero for a pair of anticommuting Pauli strings, leading to a vanishing contribution to \Eq{average_gram} for \emph{any} Pauli observable $A_\beta$. 

\paragraph*{$2k_h-1$-local configurations suffice.} We have shown that for pair of Hamiltonian terms $S_i, S_j$, it is sufficient to sample local configurations on their shared support; and that it is sufficient to consider Hamiltonian terms that intersect. Therefore, $\frac{1}{\Nspam}K^T K$ becomes diagonal with entries described by \Eq{diagonal_entries} if all the local initial state configurations are uniformly sampled within any $2k_h-1$-ranged patch.

\subsection{Examples for diagonal entry calculations} \label{app:calc-1-loc-obser}

We would like to evaluate the diagonal entries of $\frac{1}{\Nspam} K^T K$ in the limit of all SPAM settings $\Nspam\rightarrow \infty$ [\Eq{all_p_KTK}] for a few simple cases: (i) $1$-local observables and generic Hamiltonian terms, corresponding to the result of \Eq{analytic_spectrum} and (ii) $1$-local Hamiltonian terms and generic observables. 

The diagonal entry for a Hamiltonian term $S_i$ [\Eq{diagonal_entries}] counts the number of measured observables that do not commute with it, normalized by the weight of the observable and of the commutator. For instance, each $1$-local observable $A_\beta$ is compatible with one third of all measurement bases, yielding a normalization factor $\frac{1}{3}$ for the observable locality ($s_\beta=1$). Given a weight $k_i$ Hamiltonian term $S_i$, any non-zero commutator $[S_i, A_\beta]$ is itself of weight $k_i$, vanishing in all but $\frac{1}{3^{k_i}}$ of all initial states ($s_{i,\beta}= k_i$). Finally, for each of the $k_i$ sites in the support of $S_i$ there are two single-site Paulis that differ from it on that site, yielding a combinatorial factor of $2k_i$ non-commuting observables. We conclude that
\begin{equation*} \label{diag KTK with k_b=1}
\frac{1}{\Nspam}(K^T K)_{ii}=\\
\sum_{A_\beta: \com{S_i, A_\beta} \neq 0} \frac{4}{3^{\abs{s_\beta} + \abs{s_{i,\beta}}}}, 
=4\cdot \frac{2 k_i}{3^{k_i+1}}.
\end{equation*}

As another example, we examine the case of a $1$- local Hamiltonian term $k_i=1$ and range $k_\beta$ observables in a one-dimensional chain with periodic boundary conditions. In that case ${\abs{s_\beta} = \abs{s_{i,\beta}}}$. For $k_\beta=2$-ranged observables, 
\begin{equation}
    \frac{1}{\Nspam}(K^T K)_{ii}=\frac{4}{{3^{2+2}}}\cdot {2\cdot2 \cdot3}=4\cdot \frac{4}{3^3}.
\end{equation}
The combinatorial factor consists of three contributions: 2 possible Pauli operators that anticommute with $S_i$ on the site that intersects with it; 2 possible choices of a neighboring site for the observable support; and 3 possible Pauli operators on the neighboring site.
Similarly, for $k_\beta=3$-ranged observables, $\frac{1}{\Nspam}(K^T K)_{ii}=\frac{4}{3^{3+3}}\cdot {2\cdot3 \cdot3^2}+\frac{4}{3^{2+2}}\cdot 2\cdot2\cdot3=4\cdot \frac{18}{3^4}$. 
The left term counts the observables with both range and weight 3, while the right term counts observables with range 3 and weight 2.

The Table below sums results for $\frac{1}{4p}(K^T K)_{ii}$ with up to range 3 observables and weight 3  Hamiltonian terms:
\begin{center}
\begin{tabular}{ |c |c| c| c| }\hline
\backslashbox{$k_i$}{$k_{\beta}$}
& 1 & 2 & 3\\ 
\hline
1 & $\nicefrac{2}{3^2}$ & $\nicefrac{4}{3^3} $& $\nicefrac{18}{3^4}$\\  
\hline
2 & $\nicefrac{4}{3^3}$ & $\nicefrac{16}{3^4}$ & $\nicefrac{52}{3^5}$\\
\hline
3 &$\nicefrac{6}{3^4}$ & $\nicefrac{28}{3^5}$&$\nicefrac{114}{3^6}$\\
\hline
\end{tabular}
\end{center}
The first column and row are the examples calculated above. The columns denote the contribution of all the observables with exact range $k_{\beta}$ to the diagonal entry. In the rows, we consider terms acting nontrivially on each of $k_i$ contiguous sites, such that $k_i$ denotes both the weight and range of the $S_i$ term. This assumption is not necessary for $k_\beta=1$, where the result depends only on the weight of $S_i$.

\subsection{Cyclic measurement scheme in higher-dimensional lattices} \label{app:cyclic}
How many copies of a state $\ket{\psi}$ on a $d$-dimensional lattice do we need to measure each  $k-ranged$ observable at least once?
In $1D$ we defined a unit cell of length $k$. To measure all the $k-ranged$ observables, it sufficed to iterate over the $3^k$ Pauli bases of a single unit cell, repeating measurement bases between different unit cells. Since different unit cell partitions are equivalent up to a permutation, this process yields at least one shot of each obsrvable using exactly $3^k$ copies of the state.

We can generalize this approach to $d$-dimensional lattices. We choose a $d$-dimensional square unit cell of side length $k$, containing $k^d$ sites (see $2D$ case in  \Fig{fig5}). As in the 1D case, we evaluate all the  $k-local$ observables within it using $3^{\left ( k^d\right )}$ different copies, repeating the measurement bases between unit cells. This procedure yields at least one sample of any observable residing within length $k$ cube, even if it does not respect the original lattice partition. Similarly, since there are $6$ Pauli eigenstates per site, $6^{k^d}$ initial states cover all the Pauli eigenstates in each length $k$ cube.

The local measurement process we described improves upon the overlapping tomography protocol of \iRef{Cotler2020QuantumTomography} for the special case of geometrically local lattices. The protocol of \iRef{Cotler2020QuantumTomography} recovers all $k$-local observables (maximal weight $k$) without assuming any spatial structure. The protocol requires $\orderof{\log^2 n}$ copies of a state of $n$ qudits to recover all observables of a fixed locality to a given worst-case error. Our protocol requires an $n$-independent number of copies to recover all short-ranged observables to a constant average accuracy. It effectively eliminates one $\log n$ factor, which is required in order to cover all long-ranged observables. The other $\log n$ factor stems only from the different error metric (worst-case vs. average-case).

\begin{figure}  \includegraphics[width=8.6cm]{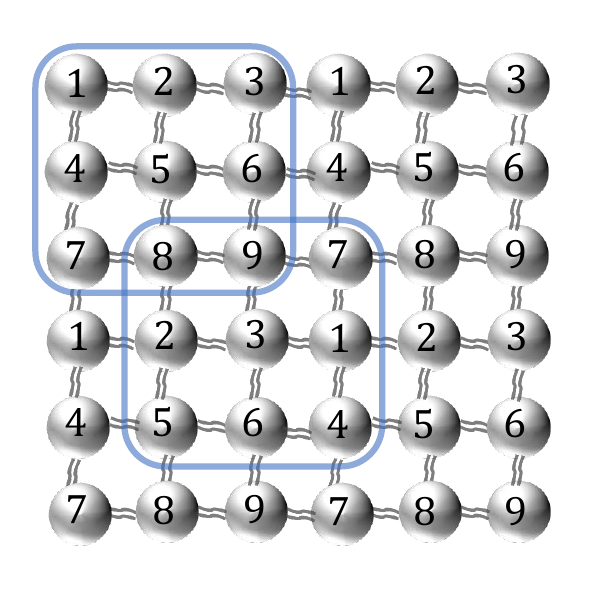} 
  \protect
  \caption{Example of overlapping tomography (cyclic method) in a $d=2$ dimension square lattice with $k_\beta=3$. The boundary of unit cells is colored in blue for two of the unit cells. In this example  we need $3^9$ copies of $\ket{\psi}$ to get one sample of every $3$-ranged observable in the system.}
 \label{fig5}
\end{figure}

\section{XXZ model}\label{app:xxz}
Why does the reconstruction error for the $XXZ$ Hamiltonians of \Eq{XXZ_hamil} decrease more rapidly with the measurement count $\Delta \sim N^{-\nicefrac{1}{3}}$ compared to the $N^{-\nicefrac{1}{4}}$ scaling of generic $2$-local Hamiltonians? 

Since the $XXZ$ Hamiltonians are real in the computational basis, they are invariant under the anti-unitary complex conjugation operator which flips $Y_i \mapsto -Y_i$ \cite{meng-cheng}. As these Hamiltonian contain only even combinations of the other Paulis $X,Z$ as well, they are actually  invariant under a time-reversal symmetry for each axis $\alpha=1,2,3$ of the Bloch sphere, which flips $\sigma^\alpha_i \mapsto -\sigma^\alpha_i$. 

Incidentally, each of the $2$-periodic cyclic initial states we have chosen in \Fig{fig3} is invariant under one of those symmetries, as the spins of each of these states are polarized only within a plane in the Bloch sphere. In other words, each of these states contains only two of the three Paulis, and is invariant under an inversion of the third. We now explain why this implies that for any Pauli observable $A$, either (i) the time derivative vanishes $\partial_t \ev{A} = 0$, or (b) the second time-derivative vanishes $\partial_{tt} \ev{A} = 0$. 

Without loss of generality, we focus on the $Y\mapsto -Y$ time reversal symmetry, and consider a real-valued initial state which is invariant under the standard complex conjugation, such as $\ket{\psi} = \ket{\uparrow \uparrow \uparrow \cdots}$. Each Pauli observable $A$ has a definite parity under this symmetry: it is either real (even) or purely imaginary (odd), depending on the parity of its number of $Y$ operators. Since the Hamiltonian is real, it flips the parity of $A$ under Heisenberg evolution: if $A$ is real, then $\partial_t A = i \com{H,A}$ is imaginary and vice versa. 

Importantly, if an observable $O$ and a state $\rho$ have definite time reversal parities, they must match for the observable to take an expectation value. If $\rho$ is a real density matrix and $O$ an imaginary observable, then
\begin{align*}
    \Tr(\rho O) &= \Tr((\rho O)^T) = \Tr(\rho^T O^T) \\
    &= - \Tr(O \rho) = - \Tr(\rho O),
\end{align*}
since $O^T = (O^\dagger)^* = O^* = -O$, and the trace of any operator is the same as the trace of its transpose. 

Therefore, only time-reversal odd observables $A$ contribute to the reconstruction in a time-reversal even initial state. If $A$ is time-reversal even, then $\partial_t A = i \com{H,A}$ is time-reversal odd; in this case, $\partial_t \ev{A}_{\psi}$ vanishes for a time-reversal even state $\ket{\psi}$ regardless of the particular coefficients of the $XXZ$ Hamiltonian. In contrast, time-reversal odd observables $A$ have a non-trivial time derivative, since $\partial_t A$ is time-reversal even; however, in this case $\partial_{tt} A = -\com{H,\com{H,A}}$ is again time reversal odd, so that $\partial_{tt} \ev{A} = 0$, and the systematic error [\Eq{b_sys}] vanishes to first order in $\delta t$. Therefore, the time-reversal odd observables $A$ that participate in the reconstruction can be measured at later times compared to a generic Hamiltonian, leading to an improved scaling of the reconstruction error $N^{-\nicefrac{1}{3}}$ according to \Eq{optimal-rec-generalized}.

This analysis holds for every Pauli observable in a state whose spins are polarized along the $X-Z$ plane, based on the $Y\mapsto -Y$ symmetry. Since the $XXZ$ Hamiltonian is invariant under inversion of any axis of the Bloch sphere, it generalizes to any initial state whose spins are polarized along a plane. The $2$-periodic initial states satisfy this condition, explaining the findings of \Fig{fig3}. More generally, if a Hamiltonian has time-reversal symmetry along one axis, initial states polarized along the perpendicular plane will yield the improved scaling of the reconstruction error.



\end{document}